\newtheorem{defn}{Definition}
\newtheorem{prop}{Proposition}
\newtheorem{cora}{Corollary}
\begin{document}

\title{The paradox of soft singularity crossing and its resolution \\
by distributional cosmological quantitities}
\author{Zolt\'{a}n Keresztes$^{1,2}$}
\email{zkeresztes@titan.physx.u-szeged.hu }
\author{L\'{a}szl\'{o} \'{A}. Gergely$^{1,2}$}
\email{gergely@physx.u-szeged.hu}
\author{Alexander Yu. Kamenshchik$^{3,4}$}
\email{Alexander.Kamenshchik@bo.infn.it}
\affiliation{$^{1}$Department of Theoretical Physics, University of Szeged, Tisza Lajos
krt 84-86, Szeged 6720, Hungary\\
$^{2}$ Department of Experimental Physics, University of Szeged, D\'{o}m T%
\'{e}r 9, Szeged 6720, Hungary\\
$^{3}$ Dipartimento di Fisica and INFN, via Irnerio 46, 40126 Bologna, Italy%
\\
$^{4}$ L. D. Landau Institute for Theoretical Physics, Russian Academy of
Sciences, Kosygin street 2, 119334 Moscow, Russia}

\begin{abstract}
A cosmological model of a flat Friedmann universe filled with a mixture of
anti-Chaplygin gas and dust-like matter exhibits a future soft singularity,
where the pressure of the anti-Chaplygin gas diverges (while its energy
density is finite). Despite infinite tidal forces the geodesics pass through
the singularity. Due to the dust component, the Hubble parameter has a
non-zero value at the encounter with the singularity, therefore the dust
implies further expansion. With continued expansion however, the energy
density and the pressure of the anti-Chaplygin gas would become ill-defined,
hence from the point of view of the anti-Chaplygin gas only a contraction is
allowed. Paradoxically, the universe in this cosmological model would have
to expand and contract simultaneously. This obviosly could not happen. We
solve the paradox by redefining the anti-Chaplygin gas in a distributional
sense. Then a contraction could follow the expansion phase at the
singularity at the price of a jump in the Hubble parameter. Although such an
abrupt change is not common in any cosmological evolution, we explicitly
show that the set of Friedmann, Raychaudhuri and continuity equations are
all obeyed both at the singularity and in its vicinity. We also prove that
the Israel junction conditions are obeyed through the singular spatial
hypersurface. In particular we enounce and prove a more general form of the
Lanczos equation.
\end{abstract}

\pacs{98.80.Cq, 98.80.Jk, 98.80.Es, 95.36.+x}
\maketitle

\section{Introduction}

The problem of cosmological singularities has been attracting the attention
of theoreticians working in gravity and cosmology since the early fifties 
\cite{Land,Misn-Torn,Hawk-Ell}. In the sixties general theorems about the
conditions for the appearance of singularities were proven \cite{Hawk,Pen}
and the oscillatory regime of approaching the singularity \cite{BKL}, the
Mixmaster universe \cite{Misner} was discovered. Basically, until the end of
nineties almost all discussions about singularities were devoted to the Big
Bang and Big Crunch singularities, which are characterized by a vanishing
cosmological radius.

However, kinematical investigations of Friedmann cosmologies have raised the
possibility of sudden future singularity occurrence \cite{sudden1}-\cite%
{sudden9}, characterized by a diverging $\ddot{a}$ whereas both the scale
factor $a$ and $\dot{a}$ are finite. Then, the Hubble parameter $H=\dot{a}/a~
$and the energy density $\rho $ are also finite, while the first derivative
of the Hubble parameter and the pressure $p$ diverge. Until recent years,
however, sudden future singularities attracted only a limited interest among
researchers. The interest grew due to two reasons. The recent discovery of
the cosmic acceleration \cite{cosmic} has stimulated the elaboration of dark
energy models, responsible for such a phenomenon (see e.g. for review \cite%
{dark}). Remarkably in some of these models the sudden singularities arise
quite naturally. Another source of the interest to sudden singularities is
the development of brane models \cite{Shtanov,Shtanov1,sudden9}, where
singularities of this kind could arise naturally (sometimes these
singularities, arising in brane-world models, are called \textquotedblleft
quiescent\textquotedblright\ \cite{Shtanov}).

In the investigations devoted to sudden singularities one can distinguish
three main topics. The first of them deals with the question of the
compatibility of the models possessing soft singularities with observational
data \cite{sudden6,Mar,tach1}. The second direction is connected with the
study of quantum effects \cite{Shtanov1,sudden8,Haro,quantum,quantum1}. Here
one can see two subdirections: the study of quantum corrections to the
effective Friedmann equation, which can eliminate classical singularities or
at least, change their form \cite{Shtanov,sudden8,Haro}; and the study of
solutions of the Wheeler-DeWitt equation for the quantum state of the
universe in the presence of sudden singularities \cite{quantum,quantum1}.
The third direction is connected with the possibility of the sudden
singularity crossing in classical cosmology \cite%
{Lazkoz,Lazkoz1,BalDab,tach2,quantum1}. The present paper is devoted exactly
to this topic.

A particular feature of the sudden future singularities is their softness 
\cite{Lazkoz}. As the Christoffel symbols depend only on the first
derivative of the scale factor, they are regular at these singularities.
Hence, the geodesics are well behaved and they can cross the singularity 
\cite{Lazkoz}. One can argue that the particles crossing the singularity
will generate the geometry of the spacetime, providing in such a way a
\textquotedblleft soft rebirth\textquotedblright\ of the universe after the
singularity crossing \cite{tach2}. Note that the possibility of crossing of
some kind of cosmological singularities was noticed already in the early
paper by Tipler \cite{Tipler}. A close idea of integrable singularities in
black holes, which can give origin to a cosmogenesis was recently put
forward in \cite{Lukash}.

As a starting point we consider an interesting example of a sudden future
singularity - the Big Brake which was discovered in Ref. \cite{tach0} while
studying a particular tachyon cosmological model. The particularity of the
Big Brake singularity consists in the fact that the time derivative of the
scale factor is not only finite, but exactly equal to zero. That makes the
analysis of the behavior in the vicinity of singularity especially
convenient. In particular, in Ref. \cite{tach1} it was shown that the
predictions of the future of the universe in this model \cite{tach0} are
compatible with the supernovae type Ia data, while in Refs. \cite%
{quantum,quantum1} some quantum cosmological questions were studied in the
presence of the Big Brake singularity.

The simplest cosmological model allowing a Big Brake singularity was also
introduced in Ref. \cite{tach0}. This model is based on the perfect fluid,
dubbed \textquotedblleft anti-Chaplygin gas\textquotedblright . This fluid
is characterized by the equation of state 
\begin{equation}
p=\frac{A}{\rho },  \label{anti-Chap}
\end{equation}%
where $A$ is a positive constant. Such and equation of state arises, for
example, in the theory of wiggly strings \cite{wiggy}. In paper \cite{tach0}
a fluid obeying the equation of state (\ref{anti-Chap}) was called
"anti-Chaplygin gas" in analogy with the Chaplygin gas \cite{Chap} which has
the equation of state $p=-A/\rho $ and has acquired some popularity as a
candidate for a unified theory of dark energy and dark matter \cite{Chap1}.

An explicit example of the crossing of the Big Brake singularity was
described in detail in paper \cite{tach2}, were the tachyon model \cite%
{tach0,tach1} was investigated. In this model the tachyon field passes
through the singularity, continuing its evolution with a recollapse towards
a Big Crunch. In a simpler model, based on the anti-Chaplygin gas, such a
crossing is even easier to understand.

The next natural step in the analysis of the soft singularities seems to be
obvious. One can consider a soft singularity of more general type than the
big brake by adding to the tachyon matter or to the anti-Chaplygin gas some
dustlike matter. However, in this case the traversability of the singularity
seems to be obstructed. The main reason for this is that while the energy
density of the tachyonic field (or of the anti-Chaplygin gas) vanishes at
the singularity, the energy density of the matter component does not,
leaving the Hubble parameter at the singularity with a finite value. Then
some kind of the paradox arises: if the universe continues its expansion,
and if the equation of state of the component of matter, responsible for the
appearance of the soft singularity (in the simplest case, the anti-Chaplygin
gas) is unchanged, then the expression for the energy density of this
component becomes imaginary, which is unacceptable. The situation looks
rather strange: indeed, the model, including dust should be in some sense
more regular, than a single exotic fluid, the anti-Chaplygin gas. Thus, if
the model based on the pure anti-Chaplygin gas has a traversable Big Brake
singularity, than the more general singularity arising in the model based on
the mixture of the anti-Chaplygin gas and dust should also be traversable.

Related to that, it was recently shown that general soft singularities
arising in the Friedmann model, filled with the scalar field with a negative
potential, inversely proportional to this field are traversable. So, what
could be wrong with the simple two-fluid model? One can see that what we
face is some sort of a clash between the equation of state of one of these
fluids and the dynamics (the Friedmann and Raychaudhuri equations) and
energy conservation equations. In this paper we shall try to resolve this
paradox, insisting on the preservation of the equation of state of the
anti-Chaplygin gas. The price which one has to pay for it is the obligatory
use of the generalized functions for some cosmological quantities. Namely,
the anti-Chaplygin gas remains physical if rather a recollapse follows, but
then the Hubble parameter would have a sharp jump, obstructing the validity
of the Raychaudhuri equation (the second Friedmann equation) in the usual
sense of functions. Thus, apparently, the evolution cannot be continued
through the soft singularity, unless treating the cosmological quantities as
distributions. We claim that such a generalization is mathematically
rigorous, moreover, the introduction of distributions is not so drastic, as
it looks at the first sight, as the pressure of the anti-Chaplygin gas
diverges anyhow at the soft singularity (as it so does for the tachyon
field). Then in the Conclusion we shall dwell on the possible physical sense
of the proposed constructions and its possible alternatives.

The plan of the paper is as follows. In Section \ref{FriedmannFlat} we
discuss generic Friedmann space-times, which admit $\dot{H}=-\infty $ type
singularities, while the Hubble parameter $H$ remains finite. Such
singularities are related to corresponding divergencies in the pressure of
the perfect fluid filling the Friedmann universe (while its energy density
stays finite). We investigate the kinematics, the geodesic equations, the
geodesic deviation equations in the vicinity of these singularities and also
prove that these singularities are weak.

In Section \ref{antiChap} we discuss a mixture of the anti-Chaplygin gas and
dust in a flat Friedmann universe and explain the essence of the paradox. We
explicitly derive the behavior of the energy density and pressure in the
vicinity of the soft singularity and we solve the geodesic equations in this
region. While the singularity turns to be traversable by the geodesics, the
explicit solution also shows that the Raychaudhuri equation is violated at
the singularity.

In Section \ref{DistChap} we add generalized distributional contributions to
both the pressure and energy density, such that (a) the equation of state of
the anti-Chaplygin gas still holds and (b) the singularity becomes
traversable. We also perform checks of the Friedmann, Raychaudhuri and
continuity equations, which all hold valid across the singularity in a
distributional sense. In the process we employ a number of Propositions on
distributions presented and proved in Appendix A. For the convenience of the
reader we present a related semi-heuristic discussion of two known
distributional identities in Appendix B. We stress that the distributional
modifications of the energy density and pressure do not modify the
cosmological evolution, but they make possible the soft singularity crossing.

In Section \ref{Junction} we revisit the junction conditions along a
spacelike hypersurface in a flat Friedmann universe. The future soft
singularity represents such a spatial hypersurface, along which the
energy-momentum tensor diverges. Extending the space-time through this
hypersurface is possible by obeying both Israel junction conditions. While
the first condition, requiring the continuity of the induced metric is easy
to satisfy (the metric stays regular at the soft singularity), the second
condition relates the jump in the extrinsic curvature to the distributional
part of the energy-momentum tensor through the Lanczos equation. We will
show that in flat Friedmann space-times the Lanczos equation holds for a
more general class of \textit{distributional} energy-momentum tensors. With
this we give a second proof that the generalized distributional
energy-momentum tensor assures the traversability of the soft singularity.
In the process we employ a simple form of the Lanczos equation valid in flat
Friedmann universes, derived in Appendix \ref{Lanczos}.

We summarize our results and give some further outlook in the Concluding
Remarks.

We chose $c=1$ and $8\pi G/3=1$. A subscript $S$ denotes the value of the
respective quantity at the soft singularity.

\section{Pressure singularities in flat Friedmann universes\label%
{FriedmannFlat}}

The line element squared of a flat Friedmann universe can be written as%
\begin{equation}
ds^{2}=-dt^{2}+a^{2}\left( t\right) \sum_{\alpha }\left( dx^{\alpha }\right)
^{2}\ ,
\end{equation}%
where $x^{\alpha }$ ($\alpha =1,2,3$) are Cartesian coordinates. The
evolution of the Friedmann universe is governed by the Raychaudhuri (second
Friedmann) equation%
\begin{equation}
\dot{H}=-\frac{3}{2}\left( \rho +p\right) \ ,  \label{Raych}
\end{equation}%
and by the continuity equation for the fluid 
\begin{equation}
\dot{\rho}+3H\left( \rho +p\right) =0\ .  \label{cont}
\end{equation}%
Here the dot denotes the derivative with respect to cosmological time $t$. A
first integral of this system is given by the first Friedmann equation%
\begin{equation}
H^{2}=\rho \ .  \label{Friedmann}
\end{equation}%
It is easy to see that the Raychaudhuri equation can be obtained from the
first Friedmann and the continuity equations.

\subsection{Kinematics in the vicinity of sudden singularities\label%
{Kinematics}}

Sudden singularities are characterized by finite $H$ and $\dot{H}\rightarrow
-\infty $ (finite $\dot{a}$ and $\ddot{a}\rightarrow -\infty $) at some
finite scale factor $a$. The energy density of the fluid is finite but its
pressure diverges at this type of singularity, therefore the term
\textquotedblleft pressure singularity\textquotedblright\ is also in use.
Then, we would like to emphasize the fact that there is an essential
difference between the sudden singularities with $H=0$ and with $H>0$. As
has been already mentioned in the Introduction, in the first case, which is
called Big Brake, the universe begin contracting and running towards the Big
Crunch singularity. Exactly this occurs in models based on tachyon field
with a particular potential \cite{tach0,tach1,tach2} or in the
anti-Chaplygin gas models. In the case of the model based on the mixture of
one of this fluids and dust, we encounter the second situation when the
value of the Hubble constant is positive at the moment of encounter with the
sudden singularity. That means that after crossing the singularity the
universe should continue its expansion, but the anti-Chaplygin gas becomes
ill-defined, as it will be shown in detail in Section \ref{antiChap},
devoted to the model based on mixture of the anti-Chaplygin gas and dust.

One possible way of overcoming this obstacle is to allow the jump in the
sign of the Hubble parameter, which as was mentioned in the Introduction
leaves valid the first Friedmann equation, the continuity equation and the
equation of state, while making invalid the Raychaudhuri equation. This last
obstacle can be cured by the acceptance the distributional Dirac $\delta $%
-function type contributions into the pressure and the energy density, which
is described in detail in the section IV.

\subsection{Geodesics in the vicinity of sudden singularities\label%
{geodesics}}

The geodesic equations in flat Friedmann space-time are%
\begin{equation}
\frac{d^{2}x^{\alpha }}{d\lambda ^{2}}+2\frac{\dot{a}}{a}\frac{dt}{d\lambda }%
\frac{dx^{\alpha }}{d\lambda }=0\ ,
\end{equation}%
\begin{equation}
\frac{d^{2}t}{d\lambda ^{2}}+a\dot{a}\sum_{\alpha }\left( \frac{dx^{\alpha }%
}{d\lambda }\right) ^{2}=0\ ,
\end{equation}%
where $\lambda $ is an affine parameter. Integrating these equations yields%
\begin{equation}
\frac{dx^{\alpha }}{d\lambda }=\frac{P^{\alpha }}{a^{2}}\ ,
\label{xadifflambda}
\end{equation}%
\begin{equation}
\left( \frac{dt}{d\lambda }\right) ^{2}=\epsilon +\frac{P^{2}}{a^{2}}\ ,
\label{tdifflambda}
\end{equation}%
with $P^{\alpha }$, $\epsilon $ integration constants and $%
P^{2}=\sum_{\alpha }\left( P^{\alpha }\right) ^{2}$. The quantity $\epsilon $
is fixed by the length of the tangent vector $u^{a}$ of the geodesic as $%
\epsilon =-u_{a}u^{a}$; i.e. one for timelike and zero for lightlike orbits.
In a comoving system $P^{\alpha }=0$ and $t=\lambda $ is affine parameter.

Eqs. (\ref{xadifflambda}) and (\ref{tdifflambda}) are singular only for
vanishing scale factor (see also Ref. \cite{Lazkoz}). Therefore, the
existence of a solution $t\left( \lambda \right) $, $x^{\alpha }\left(
\lambda \right) $ of Eqs. (\ref{xadifflambda}) and (\ref{tdifflambda}) is
assured by the Cauchy-Peano theorem for any nonzero $a$ (including the soft
singularity). Thus the functions $t\left( \lambda \right) $ and $x^{\alpha
}\left( \lambda \right) $, i.e. the geodesics can be continued through the
singularity occurring at finite scale factor. Only derivatives of higher
order than two of $t\left( \lambda \right) $ and $x^{\alpha }\left( \lambda
\right) $ are singular (as they contain $\ddot{a}$), however these do not
appear in the geodesic equations. Pointlike particles moving on geodesics do
not experience any singularity. Thus, as we argued in the preceding paper 
\cite{tach2} one is not obliged to consider such a singularity as a final
state of the universe. Indeed, passing through this singularity the matter
recreates also the spacetime in a unique way, at least for such simple
models, as those based on Friedmann metrics.

\subsection{Deviation equation in the vicinity of sudden singularities}

The 3-spaces with $t=$const have vanishing Riemann curvature. However, the
4-dimensional Riemann curvature tensor has the nonvanishing components:%
\begin{eqnarray}
R_{\,\,\,\,\,t\beta t}^{\alpha } &=&-\frac{\ddot{a}}{a}\delta _{\beta
}^{\alpha }=\left( -\dot{H}+H^{2}\right) \delta _{\beta }^{\alpha }\ , 
\notag \\
R_{\,\,\,\,\,212}^{1} &=&R_{\,\,\,\,\,313}^{1}=R_{\,\,\,\,\,323}^{2}=\dot{a}%
^{2}  \label{Riemann}
\end{eqnarray}%
and the corresponding components arising from symmetry. Here $\alpha ,\beta
=1,2,3$. Remarkably, all components which diverge at the singularity are of
the type $R_{tata}$ \cite{tach2}. Therefore, the singularity arises in the
mixed spatio-temporal components.

The geodesic deviation equation along the integral curves of $u=\partial
/\partial t$ (which are geodesics with affine parameter $t$) is%
\begin{equation}
\dot{u}^{a}=-R_{\ \ cbd}^{a}\eta ^{b}u^{c}u^{d}\ ,
\end{equation}%
where $\eta ^{b}$ is the deviation vector separating neighboring geodesics,
chosen to satisfy $\eta ^{b}u_{b}=0$. For a Friedmann universe it becomes 
\begin{equation}
\dot{u}^{a}=-R_{\ \ tbt}^{a}\eta ^{b}\propto \ddot{a}\ ,
\end{equation}%
which at the singularity diverges as $-\infty $. Therefore, when approaching
the singularity, the tidal forces manifest themselves as an infinite braking
force stopping the further increase of the separation of geodesics, but not
the evolution along the geodesics. With $\ddot{a}<0$ in the vicinity of the
singularity, once the geodesics have passed through, they will approach each
other. Therefore a contraction phase will follow: everything that has
reached the singularity will bounce back.

\subsection{The type of the singularity\label{singNature}}

In this subsection we shall present the classification of singularities,
based on the point of view of finite size objects, which approach these
singularities. In principle, finite size objects could be destroyed while
passing through the singularity due to the occurring infinite tidal forces.
A strong curvature singularity is defined by the requirement that an
extended finite object is crushed to zero volume by tidal forces. We give
below Tipler's \cite{Tipler} and Kr\'{o}lak's \cite{Krolak} definitions of
strong curvature singularities together with the relative necessary and
sufficient conditions. An alternative definition of the softness of a
singularity, based on a Raychaudhuri averaging, was developed by Dabrowski 
\cite{Dab}.

According to Tipler's definition if every volume element defined by three
linearly independent, vorticity-free, geodesic deviation vectors along every
causal geodesic through a point $p$ vanishes, a strong curvature singularity
is encountered at the respective point $p$ \cite{Tipler}, \cite{Lazkoz}. The
necessary and sufficient condition for a causal geodesic to run into a
strong singularity at $\lambda _{s}$ ($\lambda $ is affine parameter of the
curve) \cite{ClarkeKrolak} is that the double integral%
\begin{equation}
\int_{0}^{\lambda }d\lambda ^{\prime }\int_{0}^{\lambda ^{\prime }}d\lambda
^{\prime \prime }\left\vert R_{\,\,ajb}^{i}u^{a}u^{b}\right\vert
\label{cond1}
\end{equation}%
diverges as $\lambda \rightarrow \lambda _{s}$. A similar condition is valid
for lightlike geodesics, with $R_{\,\,ajb}^{i}u^{a}u^{b}$ replacing $%
R_{\,ab}u^{a}u^{b}$ in the double integral.

Kr\'{o}lak's definition is less restrictive. A future-endless,
future-incomplete null (timelike) geodesic $\gamma $ is said to terminate in
the future at a strong curvature singularity if, for each point $p\in \gamma 
$, the expansion of every future-directed congruence of null (timelike)
geodesics emanating from $p$ and containing $\gamma $ becomes negative
somewhere on $\gamma $ \cite{Krolak}, \cite{GenStrong}. The necessary and
sufficient condition for a causal geodesic to run into a strong singularity
at $\lambda _{s}$ \cite{ClarkeKrolak} is that the integral%
\begin{equation}
\int_{0}^{\lambda }d\lambda ^{\prime }\left\vert
R_{\,\,ajb}^{i}u^{a}u^{b}\right\vert  \label{cond2}
\end{equation}%
diverges as $\lambda \rightarrow \lambda _{s}$. Again, a similar condition
is valid for lightlike geodesics, with $R_{\,\,ajb}^{i}u^{a}u^{b}$ replacing 
$R_{\,ab}u^{a}u^{b}$ in the integral.

In flat Friedmann space-time the comoving observers move on geodesics having
four velocity $u=\partial /\partial t$, where $t$ is affine parameter. The
nonvanishing components of Riemann tensor are given by Eq. (\ref{Riemann}).
Since $H$ is finite along the geodesics, neither of the integrals (\ref%
{cond1}) and (\ref{cond2}) diverge at the singularity. The singularity is
weak (soft) according to both Tipler's and Kr\'{o}lak's definitions. That
means although the tidal forces become infinite, the finite objects are not
necessarily crushed when reaching the singularity (see also \cite{Lazkoz}).

\section{The paradox of the soft singularity crossing in the cosmological
model based on the anti-Chaplygin gas and dust universe\label{antiChap}}

We discuss an universe filled with two components. One is the anti-Chaplygin
gas with the equation of state (\ref{anti-Chap}) and other is the
pressureless dust.

The solution of the continuity equation for the anti-Chaplygin gas gives the
following dependence of its energy density on the scale factor: 
\begin{equation}
\rho _{ACh}=\sqrt{\frac{B}{a^{6}}-A}\ ,  \label{anti-Chap1}
\end{equation}%
where $B$ is a positive constant, determining the initial condition. The
energy density of the dust-like matter is as usual 
\begin{equation}
\rho _{m}=\frac{\rho _{m,0}}{a^{3}}\ ,~  \label{dust}
\end{equation}%
where $\rho _{m,0}$ is a constant.

It is clear that when during the expansion of the universe, its scale factor
approaches the value 
\begin{equation}
a_{S}=\left( \frac{B}{A}\right) ^{\frac{1}{6}}\ ,  \label{BB}
\end{equation}%
the energy density of the anti-Chaplygin gas vanishes, and its pressure
grows to infinity. That means that the deceleration also becomes infinite.
However, the energy density of dust remains finite, hence the same is true
also for the Hubble parameter. It is here that the paradox arises: if the
universe continues to expand, the expression under the sign of the square
root in Eq. (\ref{anti-Chap1}) becomes negative and the energy density of
the anti-Chaplygin gas becomes ill-defined. A way out of this situation is
only by assuming that at this moment the Hubble parameter changes its sign,
while keeping its absolute value (such that the energy density will not have
a jump, as implied by the Friedmann equation). This possibility will be
studied in detail in the following subsections.

\subsection{Evolutions in the vicinity of the singularity\label{closeFSS}}

Let us substitute the expressions (\ref{anti-Chap1}) and (\ref{dust}) into
the first Friedmann equation. We shall find its solution for the universe
approaching to the soft singularity point at the moment $t_{S}$ (the latter
cannot be found analytically, but its value is not important for our
analysis): 
\begin{equation}
a(t)=a_{S}-\sqrt{\frac{\rho _{m,0}}{a_{S}}}(t_{S}-t)-\sqrt{\frac{2Aa_{S}^{2}%
}{3H_{S}}}(t_{S}-t)^{3/2}\ ,  \label{Fried-sing}
\end{equation}%
where%
\begin{equation}
H_{S}=\sqrt{\frac{\rho _{m,0}}{a_{S}^{3}}}  \label{Hubble-sing}
\end{equation}%
is the value of the Hubble parameter at $t=t_{S}$. Correspondingly the
leading terms of the energy densities of the anti-Chaplygin gas and dust,
also of the pressure of the anti-Chaplygin gas are 
\begin{equation}
\rho _{m}=H_{S}^{2}+3H_{S}^{3}(t_{S}-t)\ ,  \label{dust1}
\end{equation}%
\begin{equation}
\rho _{ACh}=\sqrt{6AH_{S}(t_{S}-t)}\ ,  \label{anti-Chap2}
\end{equation}%
\begin{equation}
p_{ACh}=\sqrt{\frac{A}{6H_{S}(t_{S}-t)}}\ .  \label{anti-Chap3}
\end{equation}

One can see that the expressions (\ref{Fried-sing}), (\ref{anti-Chap2}) and (%
\ref{anti-Chap3}) cannot be continued for $t>t_{S}$ due to the emerging
negative quantities under the square roots. The assumption of a sharp
transition from expansion to contraction implies the following changes in
Eqs. (\ref{Fried-sing}) -- (\ref{anti-Chap3}): 
\begin{equation}
a(t)=a_{S}-\sqrt{\frac{\rho _{m,0}}{a_{S}}}|t_{S}-t|-\sqrt{\frac{2Aa_{S}^{2}%
}{3H_{S}}}|t_{S}-t|^{3/2}\ ,  \label{Fried-sing1}
\end{equation}%
\begin{equation}
\rho _{m}=H_{S}^{2}+3H_{S}^{3}|t_{S}-t|\ ,  \label{dust2}
\end{equation}%
\begin{equation}
\rho _{ACh}=\sqrt{6AH_{S}|t_{S}-t|}\ ,  \label{anti-Chap20}
\end{equation}%
\begin{equation}
p_{ACh}=\sqrt{\frac{A}{6H_{S}|t_{S}-t|}}\ .  \label{anti-Chap30}
\end{equation}%
The quantities (\ref{Fried-sing1})--(\ref{anti-Chap20}) are well-defined and
continuous at the moment of the singularity crossing. The expression for the
pressure (\ref{anti-Chap30}) is divergent, but this divergence is integrable
and this is sufficient for our purposes. These new expressions satisfy the
Friedmann equation, the continuity equations and the equation of state for
the anti-Chaplygin gas. However, the time derivatives of these quantities
are not continuous and it is the reason of the failure of the Raychaudhuri
equation. We shall analyze this problem in the following section, but before
we discuss the geodesics in the vicinity of the singularity.

\subsection{Singularity crossing geodesics}

We can integrate explicitly the geodesics equations (\ref{xadifflambda}) and
(\ref{tdifflambda}) in the vicinity of singularity, using the expression (%
\ref{Fried-sing1}) for the cosmological factor, also taken in the vicinity
of singularity. Choosing the affine parameter in such a way that the point $%
\lambda =0$ corresponds to the singularity crossing we obtain up to the
second order in $\lambda $ terms 
\begin{equation}
t=t_{S}+\sqrt{\epsilon +\frac{P^{2}}{a_{S}^{2}}}\lambda +\frac{P^{2}H_{S}}{%
2a_{S}^{2}}sgn(\lambda )\lambda ^{2}\ ,  \label{tgeod}
\end{equation}%
\begin{equation}
x^{\alpha }=x_{S}^{\alpha }+\frac{P^{\alpha }}{a_{S}^{2}}\lambda +\sqrt{%
\epsilon +\frac{P^{2}}{a_{S}^{2}}}\frac{P^{\alpha }H_{S}}{a_{S}^{2}}%
sgn(\lambda )\lambda ^{2}\ .  \label{xgeod}
\end{equation}%
One can see from Eqs. (\ref{tgeod}) and (\ref{xgeod}) that not only the time
and spatial coordinates of the geodesics are continuous at the soft
singularity crossing, but also their first derivatives with respect to the
affine parameter $\lambda $.

\section{Singularity crossing, the Raychaudhuri equation and distributions 
\label{DistChap}}

Let us discuss the expressions for the Hubble parameter and its time
derivative in the vicinity of the singularity. Starting from the expression (%
\ref{Fried-sing1}) we obtain 
\begin{eqnarray}
H(t) &=&H_{S}sgn(t_{S}-t)  \notag \\
&&+\sqrt{\frac{3A}{2H_{S}a_{S}^{4}}}sgn(t_{S}-t)\sqrt{|t_{S}-t|}~,
\label{Hubble-sing2}
\end{eqnarray}%
\begin{equation}
\dot{H}=-2H_{S}\delta (t_{S}-t)-\sqrt{\frac{3A}{8H_{S}a_{S}^{4}}}\frac{%
sgn(t_{S}-t)}{\sqrt{|t_{S}-t|}}~.  \label{Hubble-der-sing}
\end{equation}%
Naturally, the $\delta $-term in $\dot{H}$ arises because of the jump in $H$%
, as the expansion of the universe is followed by a contraction. To restore
the validity of the Raychaudhuri equation we shall add a singular $\delta $%
-term to the pressure of the anti-Chaplygin gas, which will acquire the form 
\begin{equation}
p_{ACh}=\sqrt{\frac{A}{6H_{S}|t_{S}-t|}}+\frac{4}{3}H_{S}\delta (t_{S}-t)~.
\label{pressure-new}
\end{equation}%
The equation of state (\ref{anti-Chap}) of the anti-Chaplygin gas is
preserved, if we also modify the expression for its energy density: 
\begin{equation}
\rho _{ACh}=\frac{A}{\sqrt{\frac{A}{6H_{S}|t_{S}-t|}}+\frac{4}{3}H_{S}\delta
(t_{S}-t)}~.  \label{en-den-new}
\end{equation}%
The last expression should be understood in the sense of the composition of
distributions (see Appendix A and the references therein).

In order to prove that $p_{ACh}$ and $\rho _{ACh}$ represent a
self-consistent solution of the system of cosmological equations, we shall
use the following distributional identities: 
\begin{eqnarray}
\left[ sgn\left( \tau \right) g\left( \left\vert \tau \right\vert \right) %
\right] \delta \left( \tau \right) &=&0\ ,  \label{proposition1} \\
\left[ f\left( \tau \right) +C\delta \left( \tau \right) \right] ^{-1}
&=&f^{-1}\left( \tau \right) \ ,  \label{proposition2} \\
\frac{d}{d\tau }\left[ f\left( \tau \right) +C\delta \left( \tau \right) %
\right] ^{-1} &=&\frac{d}{d\tau }f^{-1}\left( \tau \right) \ .
\label{proposition3}
\end{eqnarray}%
Here $g\left( \left\vert \tau \right\vert \right) $ is bounded on every
finite interval, $f\left( \tau \right) >0$ and $C>0$ is a constant. These
identities follow from the Propositions 1, 2 and the Corollary enounced and
proved in the Appendix A. The parameter $\tau $ stays instead of the
difference $t_{S}-t$.

Because of Eq. (\ref{proposition2}), the energy density (\ref{en-den-new})
behaves as a continuous function which vanishes at the singularity. The
first term in the expression for the pressure (\ref{pressure-new}) diverges
at the singularity. Therefore the addition of a Dirac delta term, which is
not changing the value of $p_{ACh}$ at any $\tau \neq 0$ (i.e. $t\neq t_{S}$%
) does not look too drastic and might be considered as a some kind of
renormalization.

To prove that Friedmann, Raychaudhuri and continuity equations are satisfied
we must only investigate those terms, appearing in the field equations,
which contain Dirac $\delta$-functions, since without them, these equations
can be reduced to those we have found in the previous section. First, we
check the continuity equation for the anti-Chaplygin gas. Due to the
identities (\ref{proposition2})-(\ref{proposition3}), the $\delta \left(
\tau \right) $-terms occurring in $\rho _{ACh}$ and $\dot{\rho}_{ACh}$ could
be dropped. We keep them however in order to have the equation of state
explicitly satisfied. Then the $\delta \left( \tau \right) $-term appearing
in $3Hp_{ACh}$ vanishes, because the Hubble parameter changes sign at the
singularity [see Eq. (\ref{proposition1})].

The $\delta \left( \tau \right) $-term appearing in $\rho _{ACh}$ does not
affect the Friedmann equation due to the identity (\ref{proposition2}).
Finally, the $\delta $-term arising in the time derivative of the Hubble
parameter in the left-hand side of the Raychaudhuri equation is compensated
by the conveniently chosen $\delta $-term in the right-hand side of Eq. (\ref%
{pressure-new}).

\section{The junction conditions across the singularity\label{Junction}}

In this section we discuss the singularity crossing in a slightly different
way, by analyzing the junction conditions. We have to match two space-time
regions across the space-like hypersurface $\tau =0$. The junction of two
space-time regions has to obey the Israel matching conditions \cite{Israel},
namely, the induced metric should be continuous and the extrinsic curvature
of the junction hypersurface could possibly have a jump, which is related to
the distributional energy-momentum tensor on the hypersurface by the Lanczos
equation. The scale factor being continuous across the singularity the first
Israel condition is obeyed. We will next prove that the second Israel
junction condition (the Lanczos equation \cite{Lanczos}, \cite{Israel}) is
also satisfied.

For this we have to check whether Eq. 
\begin{equation}
\frac{\partial }{\partial t}\left( Ha^{2}\right) =\left\{ -H^{2}+\frac{3}{2}%
\left[ \widetilde{\rho }-\widetilde{p}-\overline{p}\delta \left( \tau
\right) \right] \right\} a^{2}\ ,  \label{Hderiv1}
\end{equation}%
(see Appendix \ref{Lanczos}), still implies the Lanczos equation 
\begin{equation}
\Delta H|_{t_{s}}=-\frac{3}{2}\overline{p}\ ,  \label{L1}
\end{equation}%
derived in Appendix \ref{Lanczos}, when $\widetilde{p}+\overline{p}\delta
\left( \tau \right) =p_{ACh}$, $\widetilde{\rho }=\rho _{m}+\rho _{ACh}$ and 
$\rho _{ACh}$ is generalized to a distribution 
\begin{equation}
\rho _{ACh}a^{2}=\frac{P\left( \tau \right) }{\left[ R\left( \tau \right)
+Q\left( \tau \right) \delta \left( \tau \right) \right] ^{\omega }}~.
\end{equation}
Here $\omega >0$, $R\left( \tau \right) >0$, $Q\left( \tau \right) >0$ and $%
P\left( \tau \right) $ is bounded.

When Eq. (\ref{Hderiv1}) is applied to a test function $\varphi \left( \tau
\right) $, the terms containing $H^{2}$ and $\rho _{m}$ give regular
contributions and the limits of the respective integrals vanish, similarly
as discussed in Appendix \ref{Lanczos}. Also, due to Proposition 2 given in
the Appendix A, the integral of the distributional term containing $\rho
_{ACh}$ becomes%
\begin{equation}
\int_{-\varepsilon }^{\varepsilon }\frac{P\left( \tau \right) \varphi \left(
\tau \right) }{R^{\omega }\left( \tau \right) }d\tau ~,
\end{equation}%
which also vanishes for $\varepsilon \rightarrow 0$. We still have to
consider the contributions 
\begin{equation}
\int_{-\varepsilon }^{\varepsilon }\left[ \widetilde{p}+\overline{p}\delta
\left( \tau \right) \right] \varphi \left( \tau \right) a^{2}d\tau \ .
\end{equation}%
Although the contribution $\widetilde{p}\varphi \left( \tau \right) a^{2}$
to the integrand is singular at $\tau =0$, its integral can be conveniently
evaluated by the Residue Theorem. For this we remark, that the integrand is
an analytically extendible function into the complex plane in the vicinity
of $\tau =0$ and its residue is zero, therefore the integral vanishes.
Finally, the contribution containing $\overline{p}$ leads by integration and
the limiting process to the right hand side of the Lanczos equation (\ref{L1}%
).

Therefore we have proven that the space-time regions separated by the
singular spatial hypersurface, representing the pressure singularity, can be
joined. In other words, the singularity becomes traversable.

\section{Concluding Remarks}

It is known that certain models of cosmological fluids in Friedmann
universes, like the anti-Chaplygin gas or the tachyon field with a special
potential \cite{tach0}, evolve into a sudden future singularity, which in
spite of a diverging pressure, is weak. It was argued that singularities of
this kind could be traversable despite infinite tidal forces emerging at the
singularity for an infinitesimally short time \cite{Lazkoz}. In Ref. \cite%
{tach2} the process of crossing of the Big Brake singularity was described
in some detail for the tachyon model \cite{tach0}. (The particularity of the
Big Brake singularity, consists in the fact that at the crossing of such a
singularity the Hubble variable is not only finite, but vanishes.) We also
note recent discussions \cite{nonsoft} on crossing the \textquotedblleft
traditional\textquotedblright\ Big Bang and Big Crunch singularities.

In the present paper we considered a simple cosmological model containing a
mixture of anti-Chaplygin gas and dust. We have shown that the geodesics
equations and their solutions are still well-defined in this case, however
the inclusion of dust generates a nonzero value of the Hubble parameter at
the singularity encounter, generating the following paradox. The dust would
require a continued expansion, which would make the energy density and
pressure of the anti-Chaplygin gas ill-defined. A contraction in turn, would
be compatible with the anti-Chaplygin gas, nevertheless implying an abrupt
change of the Hubble parameter from expansion to contraction. The jump in
the Hubble parameter implies the appearance of the $\delta $ function in the
Raychaudhuri equation (which contains $\dot{H}$).

We have cured this situation by redefining the pressure and energy density
of the anti-Chaplygin gas as distributions. As an equivalent interpretation,
the pressure can be generalized by the addition of a distributional
contribution, while the energy density left unchanged, at the price of
redefining the equation of state of the anti-Chaplygin gas in a
distributional sense. Then all cosmological equations are satisfied in the
same distributional sense. We have also shown, that the Israel junction
conditions are obeyed through the singular spatial hypersurface, in
particular we have enounced and proved a more general form of the Lanczos
equation. The results rely on two Propositions and a Corollary proven in
Appendix A.

The resolution of the paradox at the soft singularity crossing by the
introduction of distributional quantities and equations may look unusual,
however distributional quantities, localized on hypersurfaces are quite
commonly used in general relativity and other gravitational theories.
Spacetime regions are frequently matched by the inclusion of distributional
layers; also shock-waves can be modeled by Dirac $\delta $-functions.
Braneworld models \cite{RS}, \cite{GergelyFriedmann} arise due to the
orbifold boundary conditions, the non-smoothness of the 5-dimensional \
metric at the brane (the jump in its extrinsic curvature) being directly
related to the distributional 3+1 standard model fields embedded in the
5-dimensional spacetime. Besides, metrics allowing distributional curvature
were considered earlier for studying strings and other distributional
sources in general relativity \cite{Geroch}. The applications of the
distributional quantities to the study of Schwarzschild geometry and point
massive particles in general relativity were used in Refs. \cite{Balasin}
and \cite{Katanaev} respectively.

More generically the connection between singularities and the distributional
treatment of the physical quantities is well-known in quantum field theory.
Indeed, the appearance of the ultraviolet divergences can be understood as
the result of the indefiniteness of the product of distributions and the
renormalization procedure could be interpreted as a definition of such a
product \cite{B-Sh}.

We hope that the investigations presented here may turn useful in deriving
similar results in connection with the traversability of other types of
sudden singularities.

While mathematically self-consistent, the scenario presented in this paper
may look somewhat counter-intuitive from the physical point of view. This is
because its essential ingredient is the abrupt change of the expansion into
a contraction. However, such a behavior is not more counter-intuitive that
the absolutely elastic bounce of the ball from a rigid wall, as known in
classical mechanics. Indeed, in the latter case the velocity and the
momentum of the ball change their direction abruptly. That means that an
infinite force acts from the wall onto the ball during an infinitely small
interval of time. The result of this action is however integrable and
results in a finite change of the momentum of the ball. In fact, the
absolutely elastic bounce is an idealization of a process of finite
time-span during which inelastic deformations of both the ball and the wall
are likely. It is reasonable to think that something similar occurs also in
the two-fluid universe model presented in this paper, which undergoes a
transition from an expanding to a contracting phase. The smoothing of this
process should involve some (temporary) geometrically induced change of the
equation of state of matter. Note, that such changes are not uncommon in
cosmology. In the tachyon model \cite{tach0} which was starting point of our
studies of the Big Brake singularities, there was the tachyon -pseudotachyon
transformation driven by the continuity of the cosmological evolution. In a
cosmological model with the phantom field with the cusped potential \cite%
{we-phantom}, the transformations between phantom and standard scalar field
were considered. Thus, one can imagine that the real process of the
transition from the expansion to contraction induced by passing through a
soft singularity can imply some temporary change of the equation of state
which makes the above processes smoother. We hope to explore such a scenario
in the future.

\section*{ACKNOWLEDGMENTS}

We are grateful to V. Gorini, M. O. Katanaev, V. N. Lukash, U. Moschella, D.
Polarski and A. A. Starobinsky for useful discussions. The work of ZK was
supported by OTKA\ grant no. 100216 and AK was partially supported by the
RFBR grant no. 11-02-00643.

\appendix

\section{Propositions on the product and the composition of distributions 
\label{Prop}}

To investigate how the Friedmann universe crosses a soft singularity, we
must solve the field equation in distributional sense. For this purpose we
give the definitions of the product and of the composition of distributions
and prove two propositions. Fisher derived the following result: $\left[
sgn\left( \tau \right) \left\vert \tau \right\vert ^{\lambda }\right] \delta
\left( \tau \right) =0$ for $\lambda >-1$ \cite{Fisherproof}. Our first
proposition generalizes this equation for $\lambda \geq 0$. The second
proposition generalizes Antosik's result: $\left[ 1+\delta \left( \tau
\right) \right] ^{-1}=1$ \cite{Antosik}. Finally, we show a corollary.

Let $\rho \left( \tau \right) $ be any infinitely differentiable function
having the following properties: $i)$ $\rho \left( \tau \right) =0$ for $%
\left\vert \tau \right\vert \geq 1$; $ii)$ $\rho \left( \tau \right) \geq 0$%
; $iii)$ $\rho \left( \tau \right) =\rho \left( -\tau \right) $; $iv)$ $%
\int_{-1}^{1}\rho \left( \tau \right) d\tau =1$. Then $\delta _{n}\left(
\tau \right) =n\rho \left( n\tau \right) $ (with $n=1$, $2$, ...) is a
regular sequence of infinitely differentiable functions converging to Dirac
delta function: $\lim_{n\rightarrow \infty }\left\langle \delta _{n},\varphi
\right\rangle =\left\langle \delta ,\varphi \right\rangle $ for any $\varphi
\in \mathcal{D}$ \cite{Zhi-Fisher}. Here $\mathcal{D}$ denotes the space of
test functions having continuous derivatives of all orders and compact
support. The action of an $f\in \mathcal{D}^{\prime }$ distribution on test
functions $\varphi $ is given by $\left\langle f,\varphi \right\rangle$,
which in the case when $f$ is an ordinary locally summable function is
nothing but $\int_{-\infty }^{\infty }f\left( \tau \right) \varphi \left(
\tau \right) d\tau $. We note that $\delta _{n}\left( \tau \right) $ has the
compact support: $\left( -1/n,1/n\right) $. We will also use the $n$-th
derivative of $f\in \mathcal{D}^{\prime }$ acts as$\ \left\langle df\left(
\tau \right) /d\tau ^{n},\varphi \right\rangle =\left( -1\right)
^{n}\left\langle f\left( \tau \right) ,d\varphi /d\tau ^{n}\right\rangle $.

For an arbitrary distribution $f$, the function $f_{n}\left( \tau \right)
=f\ast \delta _{n}\equiv \left\langle f\left( \tau -x\right) ,\delta
_{n}\left( x\right) \right\rangle $ gives a sequence of infinitely
differentiable functions converging to $f$.

\begin{defn}
: The commutative product of $f$ and $g$ exists and is equal to $h$ on the
open interval $\left( a,b\right) $ ($-\infty \leq a<b\leq \infty $) if%
\begin{equation*}
\lim_{n\rightarrow \infty }\left\langle f_{n}g_{n},\varphi \right\rangle
=\left\langle h,\varphi \right\rangle
\end{equation*}%
for any $\varphi \in \mathcal{D}$ with support contained in the interval $%
\left( a,b\right) $ \cite{Zhi-Fisher} \footnote{%
This definition can be generalized for the cases when the usual limit does
not exist by taking the so-called neutrix limit \cite{Zhi-Fisher}, \cite%
{Corput}. However, we do not need for this more general definition here.}.
\end{defn}

\begin{prop}
: The commutative product of $sgn\left( \tau \right) g\left( \left\vert \tau
\right\vert \right) $ and $\delta \left( \tau \right) $ exists and%
\begin{equation*}
\left[ sgn\left( \tau \right) g\left( \left\vert \tau \right\vert \right) %
\right] \delta \left( \tau \right) =0
\end{equation*}%
for arbitrary $g\left( \left\vert \tau \right\vert \right) $ bounded on
every finite interval.
\end{prop}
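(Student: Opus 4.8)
The plan is to unfold the definition of the commutative product and to exploit the fact that $F(\tau):=sgn(\tau)g(|\tau|)$ is an \emph{odd} function, while the mollifier $\delta_n(\tau)=n\rho(n\tau)$ is \emph{even} (property $iii$ of $\rho$) and nonnegative (property $ii$). First I would form the regularizations $F_n=F\ast\delta_n$ and $\delta\ast\delta_n=\delta_n$ (the latter already smooth), and reduce the claim to showing
\begin{equation*}
\lim_{n\rightarrow\infty}\left\langle F_n\delta_n,\varphi\right\rangle=\lim_{n\rightarrow\infty}\int_{-\infty}^{\infty}F_n(\tau)\delta_n(\tau)\varphi(\tau)\,d\tau=0
\end{equation*}
for every $\varphi\in\mathcal{D}$.

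Two preliminary observations carry the argument. (a) Since $\delta_n$ is even, convolution preserves parity, so $F_n$ is again odd; being the convolution of a locally summable function with a smooth compactly supported kernel, it is moreover $C^\infty$. (b) Because $g$ is bounded on every finite interval, there is a constant $M$ with $|F(s)|\leq M$ for $|s|\leq 2$; as $\delta_n$ is a probability density supported in $(-1/n,1/n)$, this yields the \emph{uniform} bound $|F_n(\tau)|\leq M$ for all $|\tau|\leq 1$ and all $n\geq 1$. This uniform bound is essential, because $F_n$ does \emph{not} tend to zero pointwise on the $1/n$ scale where $\delta_n$ is concentrated.

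The computation then proceeds by splitting $\varphi(\tau)=\varphi(0)+[\varphi(\tau)-\varphi(0)]$. For the constant part, $F_n\delta_n$ is an odd, continuous, compactly supported function, so $\varphi(0)\int F_n\delta_n\,d\tau=0$ \emph{exactly}, for every $n$. For the remainder, using $\delta_n\geq 0$, the uniform bound $|F_n|\leq M$ and $\int\delta_n=1$, I would estimate
\begin{equation*}
\left|\int F_n\delta_n\big[\varphi(\tau)-\varphi(0)\big]d\tau\right|\leq M\!\!\sup_{|\tau|<1/n}\!|\varphi(\tau)-\varphi(0)|,
\end{equation*}
whose right-hand side tends to $0$ by continuity of $\varphi$ at the origin. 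This proves the claim.

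The main obstacle is the leading term: a crude bound gives only $|\langle F_n\delta_n,\varphi\rangle|\leq M\|\varphi\|_\infty$, which does not decay, so one genuinely needs the parity cancellation of the $\varphi(0)$ contribution rather than a size estimate. The delicate point is thus to justify rigorously that $F_n$ is odd \emph{despite} the jump discontinuity of $F$ at the origin (this follows from the change of variables $x\mapsto-x$ together with the evenness of $\delta_n$) and that it stays uniformly bounded; once these are in place the remainder estimate is routine. This is precisely where properties $ii$ and $iii$ of $\rho$ enter, and where the hypothesis that $g$ is bounded on every finite interval (rather than merely locally summable) is used.
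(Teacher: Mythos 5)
Your proof is correct and follows essentially the same route as the paper: split off the $\varphi(0)$ term, kill it exactly by the oddness of $F_n\delta_n$ (which is the heart of the argument in both versions), and show the remainder vanishes. The only cosmetic difference is in the remainder estimate --- the paper writes $\varphi(\tau)-\varphi(0)=\tau\,\varphi'(\xi\tau)$ and extracts a factor $1/n$ from $|\tau|\leq 1/n$, whereas you use the uniform bound $|F_n|\leq M$ together with continuity of $\varphi$ at the origin; both are routine and rest on the same boundedness hypothesis on $g$.
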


\begin{proof}
We would like to show that $\left\langle \left[ sgn\left( \tau \right)
g\left( \left\vert \tau \right\vert \right) \right] \delta \left( \tau
\right) ,\varphi \right\rangle =0$. Using the mean value theorem $\varphi
\left( \tau \right) =\varphi \left( 0\right) +\tau d\varphi \left( \xi \tau
\right) /d\tau $ with $0\leq \xi \leq 1$, we have%
\begin{eqnarray*}
&&\left\vert \left\langle \left[ sgn\left( \tau \right) g\left( \left\vert
\tau \right\vert \right) \right] _{n}\delta _{n}\left( \tau \right) ,\varphi
\right\rangle \right\vert  \\
&\leq &\left\vert \varphi \left( 0\right) \int_{-1/n}^{1/n}\left[ sgn\left(
\tau \right) g\left( \left\vert \tau \right\vert \right) \right] _{n}\delta
_{n}\left( \tau \right) d\tau \right\vert  \\
&&+\sup_{\left\vert \tau \right\vert \leq 1/n}\left\vert \frac{d\varphi
\left( \tau \right) }{d\tau }\right\vert  \\
&&\times \int_{-1/n}^{1/n}\left\vert \tau \left[ sgn\left( \tau \right)
g\left( \left\vert \tau \right\vert \right) \right] _{n}\delta _{n}\left(
\tau \right) \right\vert d\tau \ .
\end{eqnarray*}%
The first integral on the right side of the above equation vanishes because
the integrand is an odd function. For the second integrand, we have%
\begin{eqnarray*}
&&\int_{-1/n}^{1/n}\left\vert \tau \left[ sgn\left( \tau \right) g\left(
\left\vert \tau \right\vert \right) \right] _{n}\delta _{n}\left( \tau
\right) \right\vert d\tau  \\
&=&\int_{-1/n}^{1/n}\left\vert \tau \delta _{n}\left( \tau \right)
\right\vert \int_{-1/n}^{1/n}\left\vert g\left( \left\vert \tau
-x\right\vert \right) \right\vert \delta _{n}\left( x\right) dxd\tau  \\
&\leq &n\sup_{\left\vert \tau \right\vert \leq 1/n}\left\vert \rho \left(
\tau \right) \right\vert \int_{-1/n}^{1/n}\left\vert \tau \delta _{n}\left(
\tau \right) \right\vert \int_{-1/n}^{1/n}\left\vert g\left( \left\vert \tau
-x\right\vert \right) \right\vert dxd\tau  \\
&\leq &2\sup_{\left\vert \tau \right\vert \leq 1/n}\left\vert \rho \left(
\tau \right) \right\vert \sup_{\left\vert \tau \right\vert \leq
1/n}\left\vert g\left( \left\vert \tau \right\vert \right) \right\vert
\int_{-1/n}^{1/n}\left\vert \tau \delta _{n}\left( \tau \right) \right\vert
d\tau  \\
&=&\frac{2}{n}\sup_{\left\vert \tau \right\vert \leq 1/n}\left\vert \rho
\left( \tau \right) \right\vert \sup_{\left\vert \tau \right\vert \leq
1/n}\left\vert g\left( \left\vert \tau \right\vert \right) \right\vert
\int_{-1}^{1}\left\vert y\rho \left( y\right) \right\vert dy \\
&\leq &\frac{2}{n}\sup_{\left\vert \tau \right\vert \leq 1/n}\left\vert \rho
\left( \tau \right) \right\vert \sup_{\left\vert \tau \right\vert \leq
1/n}\left\vert g\left( \left\vert \tau \right\vert \right) \right\vert \ ,
\end{eqnarray*}%
that vanishes for $n\rightarrow \infty $.
\end{proof}

\begin{defn}
: The composition $F\left( f\right) $ of distributions $F$ and $f$ exists
and is equal to $h\in \mathcal{D}^{\prime }$ on the interval $\left(
a,b\right) $ if%
\begin{equation*}
\lim_{n\rightarrow \infty }\left[ \lim_{m\rightarrow \infty
}\int_{a}^{b}F_{n}\left( f_{m}\left( \tau \right) \right) \ \varphi \left(
\tau \right) d\tau \right] =\left\langle h,\varphi \right\rangle
\end{equation*}%
for all $\varphi \in \mathcal{D}$ with support contained in the interval $%
\left( a,b\right) $ \footnote{%
This definition can be generalized for the cases when the usual limit does
not exist by taking double neutrix limit \cite{Fisherdef}, \cite{Ozcag}, 
\cite{FisherKilicman}.}.
\end{defn}

\begin{prop}
: The composition of distribution $P\left( \tau \right) \left[ R\left( \tau
\right) +Q\left( \tau \right) \delta \left( \tau \right) \right] ^{-\omega }$
(where $\omega >0$, $P\left( \tau \right) $ is bounded, $R\left( \tau
\right) \neq 0$, and in some range close $\tau =0$ the signs of $R\left(
\tau \right) $ and $Q\left( \tau \right) $ are the same if $Q\left( \tau
\right) \neq 0$) exists if $P\left( \tau \right) /R^{\omega }\left( \tau
\right) $ exists\footnote{%
We note that this proposition can be held even if $P\left( \tau \right) =1$
and $R\left( \tau \right) =\delta \left( \tau \right) $ with $\omega =1,$ $2,
$ $...$. Indeed, $\delta ^{-\omega }\left( \tau \right) $ exists in neutrix
limit and $\delta ^{-\omega }\left( \tau \right) =0$ \cite{Ozcag}. Thus in
the definition 2, the usual limit must be changed for neutrix limit for this
case.} and 
\begin{equation*}
\frac{P\left( \tau \right) }{\left[ R\left( \tau \right) +Q\left( \tau
\right) \delta \left( \tau \right) \right] ^{\omega }}=\frac{P\left( \tau
\right) }{R^{\omega }\left( \tau \right) }\ .
\end{equation*}
\end{prop}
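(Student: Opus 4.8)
The plan is to verify the identity straight from the composition rule (Definition~2), taking the outer function to be $F(x)=x^{-\omega}$ and the inner distribution to be $f(\tau)=R(\tau)+Q(\tau)\delta(\tau)$; the bounded factor $P(\tau)$ is merely a multiplier that I carry along at the end. First I would regularize. Using $Q(\tau)\delta(\tau)=Q(0)\delta(\tau)$ and convolving with $\delta_m$ yields
\begin{equation*}
f_m(\tau)=R_m(\tau)+Q(0)\delta_m(\tau),\qquad R_m=R\ast\delta_m,
\end{equation*}
where $R_m\to R$ uniformly on compact sets and $\delta_m(\tau)=m\rho(m\tau)$ is supported in $(-1/m,1/m)$; the outer function is regularized as $F_n=F\ast\delta_n$. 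The aim is to show that the double limit of Definition~2 equals $\int_a^b R^{-\omega}(\tau)\varphi(\tau)\,d\tau$, after which multiplication by $P$ gives $P/R^{\omega}$.

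The pivotal observation is that the sign hypothesis prevents the argument of $F$ from ever reaching the singularity of $x^{-\omega}$ at $x=0$. Restricting, as Definition~2 permits, to a test function $\varphi$ supported in an interval about $\tau=0$ on which $R$ and $Q$ share one sign, and using $\delta_m\ge 0$, the two terms of $f_m$ reinforce instead of cancelling. Hence $f_m(\tau)\ge c$ for some $c>0$ (or $f_m\le -c$), uniformly in $m$, on the whole support of $\varphi$. Consequently $F_n(f_m(\tau))$ is well defined and $F_n$ is bounded on the range $[c,\infty)$ of the arguments, which is what licenses the dominated-convergence steps below.

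Next I would take the inner limit $m\to\infty$ at fixed $n$. For $|\tau|\ge 1/m$ one has $\delta_m(\tau)=0$, so $f_m(\tau)=R_m(\tau)\to R(\tau)$, and by continuity of $F_n$ together with the uniform bound on $[c,\infty)$, dominated convergence (dominating function $\sup_{[c,\infty)}F_n\cdot|\varphi|$) sends this part of the integral to $\int_a^b F_n(R(\tau))\varphi(\tau)\,d\tau$. On the shrinking set $|\tau|<1/m$ the integrand is bounded by $\sup_{[c,\infty)}F_n\cdot\sup|\varphi|$ while the domain has measure $2/m\to 0$, so its contribution vanishes. Taking then $n\to\infty$, the arguments $R(\tau)$ lie in a compact subset of $(0,\infty)$ (since $R\neq 0$) on which $F_n\to F$ uniformly, so a final dominated-convergence step produces $\int_a^b R^{-\omega}(\tau)\varphi(\tau)\,d\tau$. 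This establishes $[R+Q\delta]^{-\omega}=R^{-\omega}$, and hence $P[R+Q\delta]^{-\omega}=P/R^{\omega}$ whenever the right-hand side exists.

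The step I expect to be the main obstacle is the control near $\tau=0$: one must simultaneously keep $f_m$ away from the pole of $F$ at the origin (which is exactly where the common-sign assumption is indispensable) and show that the neighborhood where $f_m$ blows up contributes nothing because its measure shrinks like $1/m$. A subordinate but necessary technicality is respecting the prescribed order of limits ($m$ before $n$) and furnishing, at each stage, an integrable dominating function; the compact support of $\varphi$ and the uniform lower bound $c$ on $f_m$ are precisely what supply it.
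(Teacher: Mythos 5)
Your proof is correct and follows the same overall architecture as the paper's: regularize the inner and outer distributions, split the $\tau$-integration according to whether the mollified $\delta$ is active (your set $|\tau|<1/m$ is the paper's $\Omega_2$ after its change of variables $y=m(\tau-x)$), and invoke the common-sign hypothesis to prevent cancellation in the denominator. The one genuine difference is the estimate on the singular region, and yours is arguably the better one. The paper bounds the denominator from below by discarding $R_m$ and keeping only the large term $mQ_m\rho(y)$, which yields an explicit decay $m^{-(1+\omega)}$ but at the price of a factor $\sup_{\Omega_2}\rho^{-\omega}(y)$; since $\rho$ is continuous and vanishes on the boundary of its support, that supremum is not actually finite, so the paper's final inequality is formally vacuous as written (it can be repaired, e.g.\ by integrating $\rho^{1-\omega}$ for $\omega<1$ or by reinstating $R_m$). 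You instead keep $R_m$, obtain the uniform lower bound $f_m\geq c>0$ from $R_m\to R\neq 0$ plus $Q(0)\delta_m\geq 0$, conclude that $F_n\circ f_m$ is uniformly bounded there, and let the shrinking measure $2/m$ do all the work --- no negative power of $\rho$ ever appears. Two minor points to tidy up: your reduction $Q(\tau)\delta(\tau)=Q(0)\delta(\tau)$ presumes $Q$ continuous (harmless here, but the paper regularizes $Q$ separately as $Q_m$); and the restriction to test functions supported where $R$ and $Q$ share a sign should be accompanied by a remark that for $\varphi$ supported away from $\tau=0$ the $\delta$-term is eventually absent, so a partition of unity covers general $\varphi$ on the interval of Definition~2.
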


\begin{proof}
By the definition of composition of distributions, we should calculate 
\begin{eqnarray*}
&&\left\langle \frac{P_{n}\left( \tau \right) }{\left[ R_{m}\left( \tau
\right) +Q_{m}\left( \tau \right) \delta _{m}\left( \tau \right) \right]
_{n}^{\omega }},\varphi \left( \tau \right) \right\rangle  \\
&=&\int_{-\infty }^{\infty }\int_{-1/n}^{1/n}\frac{\varphi \left( \tau
\right) P_{n}\left( \tau \right) \delta _{n}\left( x\right) dxd\tau }{\left[
R_{m}\left( \tau -x\right) +Q_{m}\left( \tau \right) \delta _{m}\left( \tau
-x\right) \right] ^{\omega }}\ .
\end{eqnarray*}%
Performing a change of the variables as $\tau =\tau $, $y=m\left( \tau
-x\right) $, we have%
\begin{eqnarray*}
&=&-\frac{1}{m}\int_{-\infty }^{\infty }\int_{-\infty }^{\infty }\frac{%
\varphi \left( \tau \right) P_{n}\left( \tau \right) \delta _{n}\left( \tau
-y/m\right) }{\left[ R_{m}\left( y/m\right) +mQ_{m}\left( y/m\right) \rho
\left( y\right) \right] ^{\omega }}dyd\tau  \\
&=&-\frac{1}{m}\int_{-\infty }^{\infty }\int_{\Omega _{1}}\frac{\varphi
\left( \tau \right) P_{n}\left( \tau \right) \delta _{n}\left( \tau
-y/m\right) }{R_{m}^{\omega }\left( y/m\right) }dyd\tau  \\
&&-\frac{1}{m}\int_{-\infty }^{\infty }\int_{\Omega _{2}}\frac{\varphi
\left( \tau \right) P_{n}\left( \tau \right) \delta _{n}\left( \tau
-y/m\right) }{\left[ R_{m}\left( y/m\right) +mQ_{m}\left( y/m\right) \rho
\left( y\right) \right] ^{\omega }}dyd\tau \ ,
\end{eqnarray*}%
where $\Omega _{2}=\left\{ y:\left\vert y\right\vert <1\text{ and }\rho
\left( y\right) \neq 0\right\} $ and $\Omega _{1}=%
\mathbb{R}
-\Omega _{2}$. The double limit of the first term is%
\begin{eqnarray*}
&&\lim_{n\rightarrow \infty }\lim_{m\rightarrow \rightarrow \infty }-\frac{1%
}{m}\int_{-\infty }^{\infty }d\tau \varphi \left( \tau \right) P_{n}\left(
\tau \right)  \\
&&\times \int_{\Omega _{1}}\frac{\delta _{n}\left( \tau -y/m\right) }{%
R_{m}^{\omega }\left( y/m\right) }dy \\
&=&\lim_{n\rightarrow \infty }\lim_{m\rightarrow \rightarrow \infty
}\int_{-\infty }^{\infty }d\tau \varphi \left( \tau \right) P_{n}\left( \tau
\right)  \\
&&\times \int_{\substack{ \left\vert x\right\vert <1/n, \\ m\left\vert \tau
-x\right\vert \in \Omega _{1}}}\frac{\delta _{n}\left( x\right) }{%
R_{m}^{\omega }\left( \tau -x\right) }dx \\
&=&\left\langle \frac{P\left( \tau \right) }{R^{\omega }\left( \tau \right) }%
,\varphi \left( \tau \right) \right\rangle \ .
\end{eqnarray*}%
We investigate the absolute value of the second integral. According to our
assumptions for $R$ and $Q$, and since we are interested in $m\rightarrow
\infty $, we can choose $m$ large enough to let the signs of $R$ and $Q$ be
the same, then for $\omega >0$:%
\begin{eqnarray*}
&&\left\vert \frac{1}{m}\int_{-\infty }^{\infty }\int_{\Omega _{2}}\frac{%
\varphi \left( \tau \right) P_{n}\left( \tau \right) \delta _{n}\left( \tau
-y/m\right) }{\left[ R_{m}\left( y/m\right) +mQ_{m}\left( y/m\right) \rho
\left( y\right) \right] ^{\omega }}dyd\tau \right\vert  \\
&\leq &\left\vert \frac{1}{m^{1+\omega }}\int_{-\infty }^{\infty }\varphi
\left( \tau \right) P_{n}\left( \tau \right) \int_{\Omega _{2}}\frac{\delta
_{n}\left( \tau -y/m\right) }{Q_{m}\left( y/m\right) \rho ^{\omega }\left(
y\right) }dyd\tau \right\vert \ .
\end{eqnarray*}%
Performing a change of the variables as $z=n\left( \tau -y/m\right) $, $y=y$%
, we have%
\begin{eqnarray*}
&\leq &\frac{1}{m^{1+\omega }}\int_{-1}^{1}\int_{\Omega _{2}}\left\vert
\varphi \left( \frac{z}{n}+\frac{y}{m}\right) P_{n}\left( \frac{z}{n}+\frac{y%
}{m}\right) \frac{\rho \left( z\right) }{\rho ^{\omega }\left( y\right) }%
\right\vert dydz \\
&\leq &\frac{1}{m^{1+\omega }}\sup_{\Omega _{2},\left\vert z\right\vert \leq
1}\left\vert \varphi \left( \frac{z}{n}+\frac{y}{m}\right) P_{n}\left( \frac{%
z}{n}+\frac{y}{m}\right) \rho ^{-\omega }\left( y\right) \right\vert  \\
&&\times \int_{-1}^{1}\rho \left( z\right) dz\int_{-1}^{1}dy \\
&=&\frac{2}{m^{1+\omega }}\sup_{\Omega _{2},\left\vert z\right\vert \leq
1}\left\vert \varphi \left( \frac{z}{n}+\frac{y}{m}\right) P_{n}\left( \frac{%
z}{n}+\frac{y}{m}\right) \rho ^{-\omega }\left( y\right) \right\vert \ ,
\end{eqnarray*}%
that vanishes for $m\rightarrow \infty $ if $P$ is bounded.
\end{proof}

\begin{cora}
: The distribution $d\left\{ P\left( \tau \right) \left[ R\left( \tau
\right) +Q\left( \tau \right) \delta \left( \tau \right) \right] ^{-\omega
}\right\} /d\tau $ (with the same properties for $P$, $R$, $Q$ and $\omega $
as in proposition 2) exists if $P\left( \tau \right) /R^{\omega }\left( \tau
\right) $ and its derivative exist, and%
\begin{equation*}
\frac{d}{d\tau }\frac{P\left( \tau \right) }{\left[ R\left( \tau \right)
+Q\left( \tau \right) \delta \left( \tau \right) \right] ^{\omega }}=\frac{d%
}{d\tau }\frac{P\left( \tau \right) }{R^{\omega }\left( \tau \right) }\ .
\end{equation*}
\end{cora}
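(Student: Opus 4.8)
The plan is to deduce the Corollary directly from Proposition 2, exploiting the fact that distributional differentiation is defined by transferring the derivative onto the test function. First I would recall that for any distribution $h$ built as the limit of the smooth approximants of Definition 2, ordinary integration by parts at the level of the approximants gives $\left\langle dh/d\tau,\varphi\right\rangle =-\left\langle h,d\varphi /d\tau \right\rangle$; the boundary terms vanish because every $\varphi \in \mathcal{D}$ has compact support. Applying this to $h=P\left( \tau \right) \left[ R\left( \tau \right) +Q\left( \tau \right) \delta \left( \tau \right) \right] ^{-\omega }$, the derivative of the composition is characterized entirely by the action of $h$ on $d\varphi /d\tau$.

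Next I would invoke Proposition 2. Since $\mathcal{D}$ is closed under differentiation and differentiation does not enlarge the support, $d\varphi /d\tau$ is again an admissible test function with support contained in the same interval, so Proposition 2 yields $\left\langle h,d\varphi /d\tau \right\rangle =\left\langle P/R^{\omega },d\varphi /d\tau \right\rangle$. Reversing the integration by parts on the right, now for the ordinary locally summable function $P/R^{\omega }$ whose distributional derivative exists by hypothesis, gives $-\left\langle P/R^{\omega },d\varphi /d\tau \right\rangle =\left\langle \left( d/d\tau \right) \left( P/R^{\omega }\right) ,\varphi \right\rangle$. Chaining these equalities produces exactly the asserted identity $\left( d/d\tau \right) \{P\left[ R+Q\delta \right] ^{-\omega }\}=\left( d/d\tau \right) \left( P/R^{\omega }\right)$. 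In short, once Proposition 2 identifies the two distributions, their distributional derivatives coincide because differentiation is a continuous linear operation on $\mathcal{D}^{\prime }$.

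The routine computational content is essentially nil with Proposition 2 in hand; the only step requiring genuine care is the first one, namely verifying that the derivative-transfer identity survives the double-limit procedure used to define the composition. Concretely, I would check that $\int \left( d/d\tau \right) \left[ P_{n}\left( R_{m}+Q_{m}\delta _{m}\right) ^{-\omega }\right] \varphi \,d\tau =-\int P_{n}\left( R_{m}+Q_{m}\delta _{m}\right) ^{-\omega }\left( d\varphi /d\tau \right) d\tau$ holds for each fixed $n,m$ (ordinary integration by parts, legitimate because the approximants are smooth and $\varphi$ is compactly supported), and then that the successive limits $m\rightarrow \infty$ and $n\rightarrow \infty$ may be interchanged with this identity. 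This is precisely the point where the hypothesis that $P/R^{\omega }$ \emph{and its derivative} exist is used: it guarantees that both sides converge to the claimed limits, so that the limiting process defining the composition commutes with differentiation. I expect this commutation to be the main obstacle, but it is controlled by the same estimates already established in the proof of Proposition 2, applied now with the test function $d\varphi /d\tau$ in place of $\varphi$.
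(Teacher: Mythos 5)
Your proposal is correct and follows essentially the same route as the paper: the paper's proof likewise transfers the derivative onto the test function via $\left\langle dh/d\tau ,\varphi \right\rangle =-\left\langle h,d\varphi /d\tau \right\rangle $, notes that $d\varphi /d\tau \in \mathcal{D}$ so that Proposition 2 applies with $d\varphi /d\tau $ in place of $\varphi $, and then transfers the derivative back onto $P/R^{\omega }$. The only difference is your final paragraph's worry about commuting the double limit with differentiation, which is unnecessary once the distributional derivative is taken as \emph{defined} by duality (as the paper does), so no interchange of limits ever arises.
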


\begin{proof}
Applying the derivative of a distribution at tests functions, and using the
fact that $d\varphi /d\tau \in \mathcal{D}$ for any $\varphi \in \mathcal{D}$%
, and by the proposition 2, we have 
\begin{eqnarray*}
&&\left\langle \frac{d}{d\tau }\frac{P\left( \tau \right) }{\left[ R\left(
\tau \right) +Q\left( \tau \right) \delta \left( \tau \right) \right]
^{\omega }},\varphi \right\rangle  \\
&=&-\left\langle \frac{P\left( \tau \right) }{\left[ R\left( \tau \right)
+Q\left( \tau \right) \delta \left( \tau \right) \right] ^{\omega }},\frac{d%
}{d\tau }\varphi \right\rangle  \\
&=&-\left\langle \frac{P\left( \tau \right) }{R^{\omega }\left( \tau \right) 
},\frac{d}{d\tau }\varphi \right\rangle =\left\langle \frac{d}{d\tau }\frac{%
P\left( \tau \right) }{R^{\omega }\left( \tau \right) },\varphi
\right\rangle \ .
\end{eqnarray*}
\end{proof}

\section{Two simple examples of the product and of the decomposition of
distributions}

The definition of the product and the composition of distributions, used in
this paper and presented in the Appendix A are not often encountered in
physics. Thus, to give the reader some flavor of the corresponding
considerations, using simpler means we decided to give two semi-heuristic
examples of such products and compositions. We consider first a remarkable
formula 
\begin{equation}
\mathcal{P}\left( \frac{1}{x}\right) \delta (x)=-\frac{1}{2}\delta ^{\prime
}(x)~,  \label{dist-eq}
\end{equation}%
which was first proven in \cite{Mikus}. Here $\mathcal{P}$ means the
principal value of the corresponding function. We shall prove here that the
regularizing succession of functions with compact support $\rho $, employed
in the Appendix A and the references therein, can be chosen alternatively as
the family of the Cauchy-Lorentz functions 
\begin{equation}
f_{\epsilon }(x)=\frac{1}{\pi }\frac{\epsilon }{x^{2}+\epsilon ^{2}}~.
\label{bell}
\end{equation}%
It is well known that when the small parameter $\epsilon \rightarrow 0$, the
functions of this family tend in the distributional sense to the Dirac $%
\delta $ function. Obviously, the convolution of the function (\ref{bell}
with the Dirac $\delta $ function gives again the same function (\ref{bell}%
): 
\begin{equation}
f_{\epsilon }\ast \delta (x)=f_{\epsilon }(x)~.  \label{convol}
\end{equation}%
The calculation of the convolution of the principal value $\mathcal{P}\left( 
\frac{1}{x}\right) $ with the function $f_{\epsilon }(x)$ is slightly more
complicated: 
\begin{eqnarray}
&&\mathcal{P}\left( \frac{1}{x}\right) \ast f_{\epsilon
}(x)=\lim_{\varepsilon \rightarrow 0}\left( \int_{-\infty }^{x-\varepsilon
}dy\frac{1}{x-y}\frac{\epsilon }{\pi (y^{2}+\epsilon ^{2})}\right.  \notag \\
&&\left. +\int_{x+\varepsilon }^{\infty }dy\frac{1}{x-y}\frac{\epsilon }{\pi
(y^{2}+\epsilon ^{2})}\right) =\frac{x}{x^{2}+\epsilon ^{2}}~.
\label{convol1}
\end{eqnarray}%
The product of the expressions (\ref{convol}) and (\ref{convol1}) is 
\begin{equation}
\mathcal{P}\left( \frac{1}{x}\right) _{\epsilon }\ast \delta _{\epsilon }(x)=%
\frac{\epsilon x}{\pi (x^{2}+\epsilon ^{2})^{2}}~.  \label{convol-prod}
\end{equation}%
Let us now consider a family of functions 
\begin{equation}
\frac{df_{\epsilon }(x)}{dx}=-\frac{2x\epsilon }{\pi (x^{2}+\epsilon
^{2})^{2}}~.  \label{bell-der}
\end{equation}%
One can easily prove that if the family of functions (\ref{bell}) converges
in the distributional sense to the Dirac $\delta $ function, the family of
their derivatives (\ref{bell-der}) converges to the derivative of the delta
function. Now, comparing the right-hand sides of Eqs. (\ref{bell-der}) and (%
\ref{convol-prod}) we see that when $\epsilon \rightarrow 0$ the product in
the left-hand side of Eq. (\ref{convol-prod}) converges in the
distributional sense to $-\frac{1}{2}\delta ^{\prime }(x)$ and thus the
correctness of the equality (\ref{dist-eq}) is checked.

Now let us discuss the Antosik identity \cite{Antosik} 
\begin{equation}
\frac{1}{1+\delta (x)}=1.  \label{Ant}
\end{equation}%
Here we have the composition of the distributions $F(g)$, where $F=\frac{1}{g%
}$ and $g(x)=1+\delta (x)$. Calculating the convolutions of the
distributions $F$ and $g$ with the Cauchy-Lorentz functions (\ref{bell}) we
obtain 
\begin{equation}
F_{\sigma }(g)=F\ast f_{\sigma }(g)=\frac{g}{g^{2}+\sigma ^{2}},
\label{comp}
\end{equation}%
\begin{equation}
g_{\epsilon }(x)=1+\frac{\epsilon }{x^{2}+\epsilon ^{2}}.  \label{comp1}
\end{equation}%
Correspondingly the composition of these functions is 
\begin{equation}
F_{\sigma }(g_{\epsilon })=\frac{1+\frac{\epsilon }{\pi (x^{2}+\epsilon ^{2}}%
}{\sigma ^{2}+\left( 1+\frac{\epsilon }{\pi (x^{2}+\epsilon ^{2}}\right) ^{2}%
}  \label{comp2}
\end{equation}%
and it is easy to check that 
\begin{equation}
\lim_{\sigma \rightarrow 0}\lim_{\epsilon \rightarrow 0}F_{\sigma
}(g_{\epsilon })=1,
\end{equation}%
confirming the identity (\ref{Ant}).

\section{The Lanczos equation\label{Lanczos}}

For a generic junction surface the Lanczos equation emerges from the
Gauss-Codazzi relations \cite{Gregory}, \cite{GergelyFriedmann}. The
projected Lie derivative of the extrinsic curvature $K_{ab}$ in the normal
direction $n$ to the surface is 
\begin{equation}
h_{a}^{i}h_{b}^{j}\mathcal{L}_{\mathbf{n}}K_{ij}=-3\epsilon\left(
h_{a}^{i}h_{b}^{k}T_{ik}-\frac{h_{ab}}{2}g^{ik}T_{ik}\right) +\mathcal{Z}%
_{ab}\   \label{extrderiv}
\end{equation}%
(Eq. (21) of \cite{GergelyFriedmann} in the units $8\pi G/3=1$), with%
\begin{eqnarray}
\mathcal{Z}_{ab} &=&-\epsilon \mathcal{R}%
_{ab}+2K_{ac}K_{b}^{c}-g^{ik}K_{ik}K_{ab}  \notag \\
&&+D_{b}\alpha _{a}-\epsilon \alpha _{b}\alpha _{a}\ .
\end{eqnarray}%
Here $g_{ab}$ is the space-time metric, $h_{ab}=g_{ab}-\epsilon n_{a}n_{b}$ (%
$\epsilon =n^{a}n_{a}=\left\{ -1,1\right\} $) is the induced metric on the
junction surface, and $T_{ab}$ is the energy-momentum tensor. The tensor $%
\mathcal{Z}_{ab}$ depends only on geometrical quantities: $\mathcal{R}_{ab}$
and $D$ are the Ricci tensor and covariant derivative induced on the
hypersurface, and $\alpha _{a}=n^{c}\nabla _{c}n_{a}$, with $\nabla $ the
4-dimensional covariant derivative.

When the energy-momentum tensor is a \textit{sum} $T_{ik}=\Pi _{ik}+\Upsilon
_{ik}\delta \left( \tau \right) $ (where $\tau $ is the coordinate adapted
to $n$, i.e. $n=t_{S}^{-1}\partial /\partial \tau $, and $n^{a}\Upsilon
_{ab}=0$), with $\Pi _{ik}$ the regular 4-dimensional part and $\Upsilon
_{ik}$ the distributional part on the hypersurface, integration of Eq. (\ref%
{extrderiv}) across $\tau $ through an infinitesimal range containing the
hypersurface keeps only the distributional part, leading to the Lanczos
equation \cite{Lanczos}, \cite{Israel}. 
\begin{equation}
\Delta K_{ab}=-3\epsilon \left( \Upsilon _{ab}-\frac{\Upsilon }{2}%
h_{ab}\right) \ ,
\end{equation}%
or equivalently 
\begin{equation}
-3\epsilon \Upsilon _{ab}=\Delta K_{ab}-h_{ab}\Delta K\ .
\end{equation}%
Here $\Upsilon $ is the trace of $\Upsilon _{ab}$. As $\mathcal{Z}_{ab}$ is
finite, its contribution to the integral across the infinitesimal range also
vanishes. Without a distributional energy-momentum part, the extrinsic
curvature should be continuous.

Let us now specialize this for a junction along a maximally symmetric $\tau
=0$ spacelike hypersurface (a hyperplane with $\mathcal{R}_{ab}=0$) embedded
in a flat Friedmann space-time. The normal vector $n$ has zero acceleration $%
\alpha _{a}=0$ and the extrinsic curvature becomes $K_{ab}=\dot{a}a%
\widetilde{h}_{ab}$, with $\widetilde{h}_{ab}$ the 3-dimensional Euclidean
metric. The curvature term is $\mathcal{Z}_{ab}=-H^{2}a^{2}\widetilde{h}%
_{ab} $ and the energy momentum tensors are $\Pi _{ab}=\widetilde{\rho }%
n_{a}n_{b}+\widetilde{p}a^{2}\widetilde{h}_{ab}$ and $\Upsilon _{ab}=%
\overline{p}a^{2}\widetilde{h}_{ab}$. Since the projected Lie-derivative in
Eq. (\ref{extrderiv}) becomes a time derivative, the equation reads%
\begin{equation}
\frac{\partial }{\partial t}\left( Ha^{2}\right) =\left\{ -H^{2}+\frac{3}{2}%
\left[ \widetilde{\rho }-\widetilde{p}-\overline{p}\delta \left( \tau
\right) \right] \right\} a^{2}\ ,  \label{Hderiv}
\end{equation}%
which is a combination of the Raychaudhuri and Friedmann equations. For
finite $H$, $\widetilde{\rho }$ and $\widetilde{p}$ as before the
integration of Eq. (\ref{Hderiv}) across an infinitesimal time range $\tau $
leads to the Lanczos equation

\begin{equation}
\Delta H|_{t_{s}}=-\frac{3}{2}\overline{p}\ .  \label{Hjump}
\end{equation}

\end{document}